\documentclass[sigconf,nonacm]{aamas}
\usepackage{amsmath,amsthm}
\usepackage{booktabs,tabularx}
\usepackage{balance}
\usepackage{tikz}
\usetikzlibrary{positioning,arrows.meta}

\setcopyright{none}
\acmConference[Preprint]{Preprint}{October 2026}{}{}
\submissionType{Public preprint}

\title[Organising Trajectory Evidence for Agent Assurance]{Organising Trajectory Evidence for Language-Model Agent Assurance: Fragments, Methods, and the Residual}
\author{Xiaowei Huang}
\affiliation{\institution{University of Liverpool}\city{Liverpool}\country{United Kingdom}}
\email{xiaowei.huang@liverpool.ac.uk}
\hypersetup{pdfauthor={Xiaowei Huang},pdfsubject={Language-model agent assurance}}

\newtheorem{definition}{Definition}
\newtheorem{proposition}{Proposition}
\newcommand{\Runs}{\mathit{Runs}}
\newcommand{\Log}{\mathit{Log}}
\newcommand{\App}{\mathit{App}}
\newcommand{\Viol}{\mathit{Viol}}
\newcommand{\Exec}{\mathit{Exec}}
\newcommand{\Prop}{\mathit{Prop}}
\newcommand{\Past}{\mathbf{O}}
\newcommand{\Always}{\mathbf{G}}
\newcommand{\Eventually}{\mathbf{F}}
\newcommand{\Sig}{\Sigma}
\newcommand{\Dist}{\mathcal D}
\newcommand{\Req}{\mathcal R}
\newcommand{\Lobs}{\mathcal L_{\mathrm{obs}}}
\newcommand{\Lloc}{\mathcal L_{\mathrm{loc}}}
\newcommand{\Lact}{\mathcal L_{\mathrm{act}}}
\newcommand{\Lrule}{\mathcal L_{\mathrm{rule}}}
\newcommand{\Gdep}{\mathbf{G}_{\mathrm{dep}}}

\begin{abstract}
Methods for assessing language-model agents include rule checkers over logs, analyses of skill coverage and composition, support checkers, prefix monitors, execution gates, and rare-event estimators. Each observes a different part of a run and makes a claim of a different strength, and no common account says how these claims combine or what they leave unchecked. We give one, built from a logic, information fragments, and an assurance ledger. Requirements are formulas of a two-tier logic: an outer finite-trace temporal logic over recorded events, and an inner logic of standing over the argument structure that the agent's recorded context supports at a decision. A rule violation and an action taken on withdrawn support are thus formulas of one language. The information available to an assessor, to the agent, and to an execution gate defines fragments of that language; each checking method decides one fragment and returns a typed claim: exact, on a named test suite, a risk bound, or descriptive. The ledger merges the evidence and classifies every obligation as established, addressed but not established, or unaddressed. On 456 released $\tau^2$-bench telecom trajectories, the benchmark oracle flags 231 runs; adding a rule checker, a support stand-in, and a prefix-monitor baseline raises the union to 391, 401, and 407, each contributing flags the others miss, and a gate makes one prohibition exact on a gated deployment. The 49 unflagged runs and the unmet or unaddressed obligations form the residual. Discovery makes unknown requirements explicit, and new or stronger checkers then reduce it.
\end{abstract}
\ccsdesc[500]{Computing methodologies~Intelligent agents}
\ccsdesc[300]{Software and its engineering~Software verification and validation}
\keywords{Language-model agents, agent verification, argumentation, runtime monitoring, assurance cases, policy compliance}

\begin{document}
\maketitle

\section{Introduction}

Benchmarks usually judge a language-model agent by whether it reaches the expected outcome. Assurance asks a harder question: whether the whole course of action was acceptable, including the rules followed, the information used, and the reasons for each decision. Evidence for these questions now comes from several methods, but their results cannot simply be accumulated. They concern different failures, observe different parts of a run, and support conclusions of different strength. A useful assurance framework must therefore say which method can decide each requirement, how far its conclusion extends, and what remains unchecked.

Existing work provides the pieces: rule and support checkers, composition analyses, prefix monitors, execution gates, rare-event estimators, and outcome oracles~\cite{skillcov,skillfuzz,gc,prefixguard,scarce}. Each is useful, but no common account states how their claims combine or where their joint boundary lies. This is an agent-assurance problem: the properties concern what an agent knows locally, which actions it may take, how it interacts with its environment, and what evidence justifies a claim about a deployment. Section~\ref{sec:related} compares existing frameworks; here the methods expose the missing structure.

The need is sharpest because agent failures hide at different depths. Consider four, each surviving the natural fix for the previous one. In a released telecom run, the agent reads that a line's contract has ended, restores the line anyway, and is scored as a success. The failure hides in the \emph{score}, which checks the outcome rather than the path. Checking the whole run against the rules catches this case, but rules see actions rather than what the agent knew. An agent can act on a fact it has seen corrected, so the failure may instead hide in the \emph{facts it holds}~\cite{gc}. Tracking which facts still stand catches that case. Yet an agent with correct, current facts can still act against them. In another run, it reads twice that a customer has data left, says so, and then sells a top-up whose only justification is exhausted data. Here the failure hides in the \emph{decision}. Checking each action against its reasons catches this case, but every such check can still pass when true facts have been selected to mislead~\cite{lyingtruths}. That failure hides in \emph{where the facts come from}; our framework does not decide it, and Section~\ref{sec:residual} uses it to show where requirements must be discovered.

Three difficulties stand in the way of organising the methods that address these failures. First, they have no common language. A rule violation, an ungrounded action, an alarm, and an oracle failure are stated in different terms. Second, their claims differ in strength. A gate's invariant, a count on a test suite, and a monitor's recall are not interchangeable; sources that share labels are one witness rather than two~\cite{bloomfield2003,littlewood2007}. Third, no method states what it cannot see. We resolve all three (Figure~\ref{fig:overview}).
\begin{enumerate}
\item \emph{A two-tier logic over trajectories} (Section~\ref{sec:model}): an outer finite-trace temporal logic over recorded events, and an inner logic of standing over the argument structure extracted from the agent's context. Policy rules, the support discipline, the gate invariant, and the outcome and warning conditions are formulas of one language, and the inner tier names failures of reasoning that no formula over events can state.
\item \emph{Three viewpoints as fragments} (Sections~\ref{sec:fragments} and~\ref{sec:methods}). The information available to the offline assessor, the agent at a decision, and an online gate defines fragments of the logic, and each checking method decides one fragment with a stated claim strength. Rules entered as standing arguments show the gate fragment to be the part of the agent fragment where standing is decided by a rule. A ledger merges the evidence against acceptance contracts and reports each obligation as established, addressed but not established, or unaddressed.
\item \emph{An incremental evaluation} on 456 released $\tau^2$-bench telecom trajectories (Section~\ref{sec:study}), adding methods one at a time to answer the questions below.
\item \emph{A stress test and an account of the residual} (Sections~\ref{sec:stress} and~\ref{sec:residual}): six further method families are placed, two of them beyond single-run properties; and we state what the methods leave unchecked, how discovery makes missing requirements explicit, and how stronger evidence reduces the residual.
\end{enumerate}
The evaluation asks three questions. \textbf{RQ1}: does each added method find failures missed earlier? \textbf{RQ2}: do the methods measure different things, or repeat one another? \textbf{RQ3}: what does the merged evidence establish, and what remains? In brief, starting from the 231 runs the benchmark oracle fails, a rule checker, a support stand-in and a prefix-monitor baseline raise the flagged union to 391, 401 and 407, each contributing runs that no earlier method flags; the monitor's overlap with the oracle (161 runs against 93 under independence) reflects shared labels, not independent confirmation; and an execution gate turns one prohibition into an exact guarantee on a gated deployment, while counts on the released suite establish other obligations only for that suite. The 49 unflagged runs are not evidence of compliance. With the unmet and unaddressed obligations, they form the residual, which the framework names explicitly.

\begin{figure}[t]
\centering
\begin{tikzpicture}[
  font=\footnotesize,
  box/.style={draw,rounded corners=2pt,align=center,inner sep=3pt,minimum height=6mm},
  lab/.style={font=\scriptsize\itshape,text=black!60},
  >={Stealth[length=1.6mm]}
]
\node[box,fill=black!4] (traj) {Trajectory $\tau$ of deployment $\Sig$\\ events $\cdot$ local states $s_t$ $\cdot$ world};
\node[box,below=4mm of traj,fill=black!4] (logic) {Two-tier logic (\S\ref{sec:logic})\\ outer: temporal over events\quad inner: standing over $D_{\tau,t}$};
\node[box,below=4mm of logic] (frag) {Fragments by information (\S\ref{sec:fragments})\\ $\Lobs$ assessor\quad$\Lloc$ agent\quad$\Lact$ gate};
\node[box,below=4mm of frag,text width=0.8\columnwidth] (src) {Methods as fragment checkers (\S\ref{sec:methods})\\ oracle $\cdot$ rule checker $\cdot$ support checker\,/\,stand-in $\cdot$ prefix monitor $\cdot$ gate};
\node[box,below=4mm of src] (ledger) {Ledger (\S\ref{sec:ledger}): typed evidence $\times$ contracts\\ established $\mid$ addressed, unmet $\mid$ unaddressed};
\node[box,below=4mm of ledger,fill=black!4] (res) {Residual (\S\ref{sec:residual}): unflagged runs, unaddressed\\ obligations, undiscovered requirements};
\draw[->] (traj) -- node[lab,right] {requirements as formulas} (logic);
\draw[->] (logic) -- node[lab,right] {who can decide what} (frag);
\draw[->] (frag) -- node[lab,right] {a checker per fragment} (src);
\draw[->] (src) -- node[lab,right] {typed claims} (ledger);
\draw[->] (ledger) -- node[lab,right] {what is missing} (res);
\coordinate (L) at ([xshift=-4mm]src.west);
\draw[->] (res.west) -- (L |- res.west) -- node[lab,left,align=right] {add a\\ checker} (L |- src.west) -- (src.west);
\end{tikzpicture}
\caption{The framework. Requirements are formulas; each method decides a fragment with a stated strength; the ledger merges typed claims and names the residual; discovery makes requirements explicit, and checkers reduce it.}
\Description{A vertical flow diagram: trajectories of a deployment feed the two-tier logic, which is split into three information fragments; methods check fragments and send typed claims to the ledger, which outputs the residual; an arrow from the residual back to the methods is labelled add a checker.}
\label{fig:overview}
\end{figure}

\section{A two-tier logic over trajectories}
\label{sec:model}

\subsection{Deployments and trajectories}

The system under assessment is a deployment $\Sig=(M,P,S,T,\Pi,E)$, comprising the language model and inference configuration $M$, prompt $P$, policy and procedure documents $S$, tools $T$, permissions $\Pi$, and environment $E$. The environment includes the user or user simulator and task distribution. Changing any component may change the runs. $\Runs(\Sig)$ is the set of finite completed runs, and $\Dist_\Sig$ is a specified distribution over them. Universal claims range over $\Runs(\Sig)$; statistical claims name $\Dist_\Sig$, which a released corpus does not by itself specify.

At position $t$ of a run $\tau$ the agent has local state $s_t$, the text it has produced and received so far, and proposes an action $a_t$, a message or a tool call. The assessor receives $\Log(\tau)$: the recorded interaction plus any explicitly available task records, which may include benchmark fault metadata never revealed to the agent. A proposed call and an executed call are distinguished; they differ once a gate is introduced.

This notation separates three viewpoints that are easy to conflate. The agent chooses from what is present in $s_t$; an online gate sees $s_t$ and the proposed action before execution; and an offline assessor sees the completed log, possibly with extra metadata. A method can make an exact judgement only when its required information is available from its viewpoint. Section~\ref{sec:fragments} formalises this.

\subsection{The two tiers}
\label{sec:logic}

The \emph{outer tier} is finite-trace temporal logic over recorded events, with $\Always$ (always), $\Eventually$ (eventually) and $\Past$ (at some position up to and including the present) under the usual finite-trace semantics~\cite{lichtenstein1985}. Data variables bind identifiers, such as a mobile line, and quantify over the finite identifiers recorded in the run. Its atoms are \emph{event} predicates over messages, calls and results, such as $\Prop(a)$ and $\Exec(a)$, that action $a$ is proposed or executed, and $e(\ell)$, that a returned record has shown an ended contract for line $\ell$; and \emph{world} predicates over the environment's true state, such as $e^\star(\ell)$, that the contract has actually ended.

The \emph{inner tier} speaks about what the agent's recorded context represents as currently standing when it acts; it makes no claim about the model's internal beliefs. Its models are the symbolic structures that Grounded Continuation maintains over a conversation~\cite{gc}.

\begin{definition}[Argument structure]
\label{def:argstruct}
An \emph{argument structure} is $D=(V,\mathit{Att},\mathit{Dep},\lambda)$: a finite set $V$ of \emph{arguments}, each a premise taken from an observation or a claim that an action should be taken; an \emph{attack} relation $\mathit{Att}\subseteq V\times V$, read ``$x$ contradicts $y$''; a \emph{dependency} relation $\mathit{Dep}\subseteq V\times V$, read ``$x$ rests on $y$'', required to be acyclic; and a \emph{standing} labelling $\lambda:V\to\{\text{active},\text{weakened},\text{abandoned},\text{resolved}\}$. Following Grounded Continuation, the set of arguments in good standing is $\mathit{St}=\lambda^{-1}(\{\text{active},\text{resolved}\})$, and $D$ is well formed iff $\mathit{St}$ is conflict-free: no two standing arguments attack each other.
\end{definition}

In plain terms, $D$ records what currently supports an action, what contradicts that support, and which earlier premises the support depends on; the standing set contains the arguments that remain usable after corrections and retractions. In the refuel run of Section~\ref{sec:each}, the readings at steps 20 and 46 are \emph{active} premises, ``data is exhausted'' is attacked by them and so does not stand, and the argument for \texttt{refuel\_data} rests on it.

The structure at position $t$ is extracted from the agent's local state by an interpreter, $D_{\tau,t}=\mathit{ext}(s_t)$, together with a partial map $\mathit{arg}_{\tau,t}$ from proposed actions to the argument that the action should be taken next, where one exists. The interpreter builds $D_{\tau,t+1}$ from $D_{\tau,t}$ by classifying each new utterance or tool result as one of the eight operations of Grounded Continuation~\cite{gc}, which commit content, change standing or add attacks without erasing provenance, or pose queries; an undermined or revised argument leaves $\mathit{St}$ together with every argument resting on it through $\mathit{Dep}^{*}$, and nothing is deleted. Definition~\ref{def:argstruct} keeps only the argumentation and dependency components of Grounded Continuation's dependency map, which also carries a plausibility model and public commitments. Grounded Continuation proves that $\mathit{St}$ is conflict-free after every operation and that retraction removes from $\mathit{St}$ exactly the arguments resting on the withdrawn premise~\cite{gc}. These are properties of the structure; whether it describes the agent is a property of $\mathit{ext}$, listed as an assumption wherever used.

Inner-tier formulas are evaluated at an argument of a structure:
\[
\chi\ ::=\ \mathit{standing}\ \mid\ \neg\chi\ \mid\ \chi\wedge\chi\ \mid\ \Gdep\chi,
\]
with $D,v\models\mathit{standing}$ iff $v\in\mathit{St}$, the Boolean cases as usual, and $D,v\models\Gdep\chi$ iff $D,v'\models\chi$ for every $v'$ with $v\,\mathit{Dep}^{*}\,v'$, including $v$ itself. The \emph{bridge} $\langle\mathit{arg}(a)\rangle\chi$ holds at $(\tau,t)$ iff $\mathit{arg}_{\tau,t}(a)$ is defined and $D_{\tau,t},\mathit{arg}_{\tau,t}(a)\models\chi$. The formula that matters is
\begin{equation}
\mathit{grounded}(a)\;\equiv\;\langle\mathit{arg}(a)\rangle\,\Gdep\,\mathit{standing},
\label{eq:grounded}
\end{equation}
that $a$ has an argument, and that argument and everything it transitively rests on still stand. If $a$ has no argument, $\mathit{grounded}(a)$ is false. Because $\mathit{Dep}$ is finite and acyclic, $\mathit{grounded}(a)$ is decided by one traversal of the structure, with no model call.

The two tiers meet by lifting: an inner-tier formula is a step predicate of the outer tier, evaluated at $(\tau,t)$ on $D_{\tau,t}$, and the outer temporal operators range over positions as usual. The inner tier has no notion of time and the outer none of a premise; ``the agent acted without standing support'' needs both: $\Eventually\,\exists a\,(\Exec(a)\wedge\neg\mathit{grounded}(a))$. This one formula covers the second and third failures of the introduction: acting on a fact that has been corrected, and acting against facts that still stand.

The distinction between the tiers is operational. The outer tier can establish that a prohibited action occurred after a recorded event, but it cannot by itself distinguish a deliberate violation from a decision based on stale support. The inner tier can expose the stale support, but it does not say whether the resulting action violates policy. Combining them lets the ledger record these as different obligations rather than collapse them into one failure count.

\paragraph{Requirement families.}
A policy rule $c$ has applicability $\rho_c$ and obligation $\chi_c$, both outer-tier:
\begin{equation}
\varphi_c=\Always\,\forall x\,(\rho_c(x)\Rightarrow\chi_c(x)),\quad
\App_c=\Eventually\,\exists x\,\rho_c(x),\quad
\Viol_c=\neg\varphi_c.
\label{eq:rule}
\end{equation}
The running example is
\begin{equation}
\varphi_7=\Always\,\forall\ell\,\bigl(e(\ell)\Rightarrow\Always\neg\Exec(\mathtt{resume}(\ell))\bigr),
\label{eq:expired}
\end{equation}
over the \emph{recorded} contract state; the property ``never restore a line whose contract has actually ended'' replaces $e$ by $e^\star$, and the two coincide only under assumptions about record freshness and availability. The other families are: the support ideal $\Always\,\forall a\,(\mathit{checked}(a)\Rightarrow\mathit{grounded}(a))$, that every action presented to the support checker rests on standing premises; the gate invariant for a rejection predicate $g$ on the gated deployment $\Sig_g$,
\begin{equation}
\forall\tau\in\Runs(\Sig_g):\ \Always\,\forall a\,\bigl(\Exec(a)\Rightarrow g(s,a)=0\bigr);
\label{eq:gate}
\end{equation}
the outcome condition $\neg F$, with $F$ the oracle's failure event; and the warning ideal, that every failing run alarms at least $h$ steps before its end. Statistical requirements bound the probability of violating one of these formulas (Section~\ref{sec:ledger}). A failure that no formula expresses is outside the framework by construction; Section~\ref{sec:residual} treats that boundary as the place where requirements must be discovered.

\subsection{Fragments by information}
\label{sec:fragments}

An observation map $q$ induces an equivalence $z\sim_q z'$ iff $q(z)=q(z')$, on complete runs or on decision positions $(\tau,t)$. A predicate $f$ is \emph{$q$-determined} if it is constant on every class; this is the indistinguishability account of information in multi-agent systems~\cite{fagin1995}.

\begin{proposition}[Information needed for an exact verdict]
\label{prop:info}
There is a function $h$ with $f=h\circ q$ iff $f$ is $q$-determined. Hence if two inputs have the same observation and opposite truth values, no checker using only that observation is correct on both.
\end{proposition}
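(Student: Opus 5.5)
The proof is the elementary factorisation argument through a quotient, carried out in both directions, followed by the corollary about checkers as a contrapositive. Throughout, let $Z$ be the domain of $q$, either complete runs or decision positions $(\tau,t)$. Let $f:Z\to B$ be the predicate, with $B=\{0,1\}$ (or any codomain, since nothing depends on it).

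\emph{($\Rightarrow$)} Suppose $f=h\circ q$ and $z\sim_q z'$. Then $q(z)=q(z')$, so $f(z)=h(q(z))=h(q(z'))=f(z')$. Hence $f$ is constant on each $\sim_q$-class, i.e.\ $q$-determined.

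\emph{($\Leftarrow$)} Suppose $f$ is $q$-determined. Define $h$ on the image $q(Z)$: for $o\in q(Z)$, choose any $z$ with $q(z)=o$ and set $h(o)=f(z)$. This is well defined, because two choices $z,z'$ satisfy $q(z)=q(z')=o$, so $z\sim_q z'$, and $q$-determinedness gives $f(z)=f(z')$. Outside $q(Z)$, define $h$ arbitrarily, for instance as constantly $0$. Then $h(q(z))=f(z)$ for all $z$ by construction. The only subtlety is this well-definedness check; it is where the hypothesis is used, and no choice principle beyond picking representatives is needed, since the definition does not actually depend on the pick.

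For the consequence, model a checker using only observation $q$ as a function $c=h\circ q$ for some $h$; this is what "using only that observation" means. Suppose $q(z)=q(z')$ and $f(z)\neq f(z')$. Then $c(z)=h(q(z))=h(q(z'))=c(z')$, so $c$ agrees with $f$ on at most one of $z,z'$ and is therefore not correct on both. Equivalently, by the forward direction, correctness on all inputs would make $f=c$ $q$-determined, contradicting the witness pair.

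The main obstacle is not technical but definitional. One must fix that a checker's output is a function of $q$ alone, which excludes hidden randomness and access to $\Log(\tau)$ beyond $q$. If checkers may be randomised, I would treat $c$ as a map to distributions over $B$. The same equality $c(z)=c(z')$ then shows that a correct-with-certainty verdict cannot be achieved on both inputs.
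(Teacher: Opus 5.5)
Your proposal is correct and follows the paper's own argument: equal observations give equal values in the forward direction, and in the converse each observation is assigned the common value of $f$ on its inverse image, well defined by $q$-determinedness. Extending $h$ arbitrarily off the image of $q$, writing out the checker corollary, and the remark on randomised checkers are harmless elaborations of steps the paper leaves implicit.
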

\begin{proof}
If $f=h\circ q$, equal observations give equal values. Conversely assign to each observation the common value of $f$ on its inverse image, well defined by invariance. This is an existence claim, not a computability claim.
\end{proof}

Three observation maps define three fragments. The \emph{assessor fragment} $\Lobs$ contains the formulas determined by $\Log(\tau)$ on complete runs. The \emph{agent fragment} $\Lloc$ contains the formulas $\Always\psi$ whose step predicate $\psi$ at $(\tau,t)$ is determined by $(s_t,a_t)$, the agent's own information at the decision. The \emph{gate fragment} $\Lact$ contains the formulas of the guard shape $\Always\,\forall a\,(\Prop(a)\Rightarrow\neg g(s,a))$ with $g$ rejecting only tool calls. Write $\Lrule$ for the rules of the form~\eqref{eq:rule} over event predicates.

\begin{proposition}[Fragment structure]
\label{prop:fragments}
(i) $\Lact\subsetneq\Lloc\subseteq\Lobs$ whenever $\Log$ permits reconstruction of every $(s_t,a_t)$. (ii) $\Lrule\subseteq\Lobs$. (iii) $\Lrule$ and $\Lloc$ are incomparable: $\varphi_7\in\Lrule\setminus\Lloc$ and the support ideal is in $\Lloc\setminus\Lrule$. (iv) If, for each of the three interfaces, some two admissible runs agree on its observation but differ on a world predicate, then a formula over that predicate alone, such as the $e^\star$ form of $\varphi_7$, is in none of the fragments.
\end{proposition}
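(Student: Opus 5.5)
The plan is to reduce every part of the statement to Proposition~\ref{prop:info}, by comparing what each viewpoint observes. \textbf{Membership transfers downward.} If the observation $q'$ is a function of $q$, say $q'=k\circ q$, then every $q'$-determined predicate is $q$-determined. By Proposition~\ref{prop:info}, $f=h\circ q'$ gives $f=(h\circ k)\circ q$. Each inclusion therefore becomes a claim that one observation can be computed from another. Each strictness or exclusion becomes a pair of inputs that agree on the observation but take different truth values.

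For (i), I would first place $\Lact$ inside $\Lloc$. A guard formula has step predicate $\Prop(a)\Rightarrow\neg g(s,a)$. The gate's input is $(s_t,a_t)$, and $g$ is a function of it, so the step predicate is determined by $(s_t,a_t)$ and the formula has the required $\Always\psi$ shape. I take $\Prop(a_t)$ to be readable from the proposed action.

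Strictness then needs one witness. The support ideal $\Always\,\forall a\,(\mathit{checked}(a)\Rightarrow\mathit{grounded}(a))$ has step predicate determined by $s_t$, because $D_{\tau,t}=\mathit{ext}(s_t)$ and $\mathit{arg}_{\tau,t}$ are functions of the local state. It also constrains messages as well as tool calls, and a gate predicate may reject only tool calls. So it is not of guard shape and lies in $\Lloc\setminus\Lact$.

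For $\Lloc\subseteq\Lobs$, I use the reconstruction hypothesis. $\Log(\tau)$ determines the whole sequence of pairs $(s_t,a_t)$. Each step predicate is a function of its pair, so the finite conjunction over positions that $\Always\psi$ denotes on a complete run is a function of $\Log(\tau)$.

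For (ii), every atom of a rule of the form~\eqref{eq:rule} is an event predicate over recorded messages, calls and results. The quantified data variables range only over identifiers recorded in the run. Finite-trace semantics then makes the truth value a function of the recorded event sequence, which is part of $\Log(\tau)$.

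For (iii), I construct two separating pairs.
\begin{itemize}
\item To show $\varphi_7\notin\Lloc$, I exhibit two positions with the same $(s_t,a_t)$ that differ on the step predicate. The cleanest candidate is one where the contract record $e(\ell)$ appeared in an earlier tool result that a truncated or summarised context no longer shows, while the other run never showed it. Both then propose $\mathtt{resume}(\ell)$ from identical local states. If local states are full histories, I would instead use the timing of $\varphi_7$'s nested $\Always$: the violation depends on later execution, which $(s_t,a_t)$ does not fix. I also need to argue that no rewriting of $\varphi_7$ into the $\Always\psi$ shape is equivalent to it.
\item To show the support ideal is not in $\Lrule$, I observe that it mentions $\mathit{grounded}$, an inner-tier predicate, and not event predicates. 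Two runs with identical recorded events but different extracted structures (different $\mathit{ext}$ classifications) differ on it. Since any event-only formula agrees on such runs, no rule expresses it.
\end{itemize}

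For (iv), the hypothesis supplies, for each of the three observation maps, two admissible runs with equal observation and different $e^\star$. Proposition~\ref{prop:info} then shows the $e^\star$ form of $\varphi_7$ is determined by none of them. For the gate fragment, I argue that $e^\star$ is not a function of $(s,a)$ and so is not available to $g$.

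The main obstacle is $\varphi_7\notin\Lloc$. It depends on how much of the history $s_t$ contains and on a careful definition of what counts as an equivalent $\Always\psi$ form. I would fix the witness first and make explicit the assumption that local state may omit earlier records.
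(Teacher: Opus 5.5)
Your (i), (ii) and (iv), the downward-transfer lemma, and the strictness witness for $\Lact\subsetneq\Lloc$ all match the paper's proof. The core of (iii) is also present. The trouble is that in (iii) you try to prove incomparability \emph{up to equivalence}, and both of the extra steps this requires would fail.

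\textbf{The obligation for $\varphi_7\notin\Lloc$.} You set yourself the task of showing that no $\Always\psi$ form with $(s_t,a_t)$-determined $\psi$ is equivalent to $\varphi_7$. In the paper's model this is false. Here $s_t$ is the text produced and received so far, and on the ungated $\Sig$ every proposed call is executed. So the step predicate $\forall\ell\,(\Exec(\mathtt{resume}(\ell))\Rightarrow\neg\Past e(\ell))$ of \eqref{eq:past} is $(s_t,a_t)$-determined. The paper remarks, right after the proposition, that \eqref{eq:past} is equivalent to $\varphi_7$ on completed runs, and that its $\Prop$ form even lies in $\Lact$ when $\Past e(\ell)$ is retained. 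Fragments are therefore not closed under equivalence. Membership is read off the given formula's own step predicate, and (iii) is true only under that reading.

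\textbf{Why the truncation repair does not help.} Assuming local states may drop earlier records adds a hypothesis the statement does not make. Your truncation witness also compares states at the resume position, so it separates the step predicate of \eqref{eq:past}, not that of $\varphi_7$. If $e(\ell)$ is read as an event at the lookup, as the $\Past$ in \eqref{eq:past} suggests, then $\varphi_7$'s step predicate is vacuously true at the resume position in both of your runs. Drop both the obligation and the repair. Your fallback is exactly the paper's argument: at the position where the ended-contract record is returned, two runs can agree on $(s_t,a_t)$ and then diverge, one later executing $\mathtt{resume}(\ell)$ and one not.

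\textbf{The support ideal and $\Lrule$.} Your first sentence is already the paper's whole argument. $\Lrule$ is defined by shape, as rules of the form \eqref{eq:rule} over event predicates, and the support ideal contains the inner-tier atom $\mathit{grounded}$. The separating pair you add cannot exist. With the interpreter fixed, $D_{\tau,t}=\mathit{ext}(s_t)$ is a function of recorded text. Two runs with the same recorded interaction therefore have the same structures and the same verdict. Indeed, your own proof of $\Lloc\subseteq\Lobs$ shows the support ideal is log-determined. ``Different $\mathit{ext}$ classifications'' would mean a different interpreter, hence a different formula, not a different run. A genuine inexpressibility result would need a fixed finite vocabulary of event predicates and two runs that agree on it but differ in text the interpreter reads. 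The statement does not ask for that.
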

\begin{proof}
(i) A guard predicate is a function of $(s_t,a_t)$, so $\Lact\subseteq\Lloc$; the support ideal is in $\Lloc$, since $\mathit{checked}(a_t)$ and $D_{\tau,t}=\mathit{ext}(s_t)$ are functions of $(s_t,a_t)$, but not in $\Lact$, since $\mathit{checked}$ includes messages and no guard rejects a message. Under reconstruction every step predicate of $\Lloc$ is log-determined, so $\Lloc\subseteq\Lobs$. (ii) Event predicates and their temporal combinations are log-determined on complete runs. (iii) The instance of $\varphi_7$ at $t$ depends on positions after $t$, which $(s_t,a_t)$ does not fix, since two runs can agree at $t$ and diverge afterwards; the support ideal contains an inner-tier formula and so is not of the shape~\eqref{eq:rule} over event predicates. (iv) By the hypothesis and Proposition~\ref{prop:info}, the predicate is not determined by the log, by $(s_t,a_t)$, or by a guard. The hypothesis fails only if the environment records its complete true state, which a released corpus does not.
\end{proof}

$\Lobs$ and $\Lloc$ are defined by information alone; $\Lact$ is restricted by both information and enforceability, since a gate can only refuse calls. Membership is semantic: for example, $\varphi_7\notin\Lloc$, yet on completed runs it is equivalent to
\begin{equation}
\Always\,\forall\ell\,\bigl(\Exec(\mathtt{resume}(\ell))\Rightarrow\neg\Past e(\ell)\bigr),
\label{eq:past}
\end{equation}
and the proposed-call form of~\eqref{eq:past}, with $\Prop$ for $\Exec$, is in $\Lact$ provided $\Past e(\ell)$ is retained in $s_t$. A gate can enforce the recorded prohibition; ``perform a diagnostic after every repair'' has no equivalent in $\Lact$, since refusing calls cannot make an action happen.

\section{Methods as fragment checkers}
\label{sec:methods}

Each checking method decides one target formula in one fragment and returns a claim of one type; discovery methods instead enlarge the requirement set. Table~\ref{tab:methods} places the methods used in this paper; Section~\ref{sec:study} adds them to a corpus one at a time.

\begin{table*}[t]
\caption{Methods as fragment checkers. The claim type is what the method returns; the blind spot is the class of failures outside its target formula, not a defect of its implementation. Discovery methods enlarge the requirement set rather than check it.}
\label{tab:methods}
\small
\begin{tabularx}{\textwidth}{@{}>{\raggedright\arraybackslash}p{2.7cm}>{\raggedright\arraybackslash}p{2.1cm}>{\raggedright\arraybackslash}p{3.6cm}>{\raggedright\arraybackslash}p{2.6cm}>{\raggedright\arraybackslash}X@{}}
\toprule
Method & Reads & Target formula & Claim type & Cannot see \\
\midrule
Outcome oracle & final state, required actions & $\neg F$ & suite count & how the outcome was reached \\
Rule checker (Skill Coverage~\cite{skillcov}) & $\Log(\tau)$, complete run & $\varphi_c\in\Lobs$ for each audited rule & suite count $(n_{\mathrm{app}},n_{\mathrm{viol}})$ & unlisted requirements; the agent's premises; world predicates \\
Support checker (Grounded Continuation~\cite{gc}) & $(s_t,a_t)$, the agent's prefix & $\Always\forall a(\mathit{checked}(a)\Rightarrow\mathit{grounded}(a))\in\Lloc$ & exact for the extracted structure; suite count & whether a grounded action is correct or compliant \\
Prefix monitor (PrefixGuard~\cite{prefixguard}) & prefix $\tau_{\le t}$ & warning of the event its labels name & recall at a false-alarm rate & any event absent from its labels \\
Boundary gate & $(s_t,a_t)$ at one call & guard form of a rule in $\Lact$ & exact invariant on $\Sig_g$ & obligations to act; the run after a refusal \\
Rare-event estimator (SCARCE~\cite{scarce}) & re-sampled runs & $\Pr(\neg\varphi)$ for a score-defined event & lower risk bound & events its score does not order \\
Composition analysis (SkillFuzz~\cite{skillfuzz}) & skill descriptions & new candidate requirements & discovery & single-run reasoning \\
\bottomrule
\end{tabularx}
\end{table*}

\paragraph{Checkers by fragment.}
The \emph{oracle} targets $\neg F$, a formula of $\Lobs$ over the final position; it is existential where policies are often universal, since it can require that an action occurred but never that one did not. The \emph{rule checker} evaluates $\App_c$ and $\Viol_c$ on $\Log(\tau)$ for each audited rule and returns $(N_{\mathcal T}(\App_c),N_{\mathcal T}(\Viol_c))$; Skill Coverage~\cite{skillcov} extracts rules from a document as when--shall pairs with source spans, and its applicability counts measure how often a rule's situation was exercised. The \emph{support checker} maintains the structure of Definition~\ref{def:argstruct} per turn and decides $\mathit{grounded}(a)$ before $a$ executes, in time linear in the structure~\cite{gc}; it holds the agent to its own information rather than the assessor's log, and is exact for the extracted structure while incomplete for compliance, since a grounded action can violate a rule and an ungrounded one can turn out right. The \emph{prefix monitor} trains a scorer on prefixes against labels, excluding the last $h$ steps before a failure~\cite{prefixguard}; its claim is a recall at a false-alarm rate for whatever event its labels name. The \emph{gate} executes a call only if $g=0$, so~\eqref{eq:gate} holds on $\Sig_g$ by the execution rule, provided every relevant call is routed through it and the checked state is the one used at execution; it is the narrowest checker, since obligations to act have no guard form. A \emph{rare-event estimator} splits a rare event into nested stages along a learned score and multiplies conditional estimates, reaching below $1/N$~\cite{scarce}. Composition analysis finds requirements~\cite{skillfuzz}.

\paragraph{Rules as standing arguments.}
The tiers also compose. Enter a rule $c$ whose past form is in $\Lact$ as a standing argument $A_c\in V$ with $(A_c,\mathit{arg}(a))\in\mathit{Att}$ for any proposed action $a$ violating $\chi_c$ once $\rho_c$ holds in the retained history. We assume \emph{policy priority}: $A_c$ stays standing while its recorded premises hold, and an argument it attacks is labelled non-standing. Conflict-freeness alone does not decide which of the two leaves $\mathit{St}$.

\begin{proposition}[Support subsumes enforcement on $\Lact$]
\label{prop:subsume}
With $A_c$ entered under policy priority and $\mathit{ext}$ faithful to the retained history, for every proposed call $a$, $g_c(s_t,a)=1$ implies $\neg\mathit{grounded}(a)$ at $(\tau,t)$, where $g_c$ is the guard of $c$. Conversely, if $\mathit{arg}(a)$ is defined, every argument it transitively rests on other than itself stands, and $A_c$ is the only standing attacker of $\mathit{arg}(a)$, which would be active without standing policy attackers, then $\neg\mathit{grounded}(a)$ implies $g_c(s_t,a)=1$. Hence an execution rule that rejects ungrounded calls establishes~\eqref{eq:gate} for $g_c$ on the gated deployment, refusing in addition only calls ungrounded for another reason; the refusal names the attacking argument and is retracted by withdrawing $A_c$.
\end{proposition}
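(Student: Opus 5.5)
The proof unfolds the definitions of $\mathit{grounded}$ in \eqref{eq:grounded}, $\mathit{St}$ in Definition~\ref{def:argstruct}, and the way $A_c$ is entered. It has three parts: the forward direction, the converse, and the consequence for the execution rule.

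\emph{Forward direction.} Fix $(\tau,t)$ and a proposed call $a$ with $g_c(s_t,a)=1$. Because the guard $g_c$ is the past form of $c$, the value $g_c(s_t,a)=1$ means two things: $\rho_c$ holds in the history retained in $s_t$ (for instance $\Past e(\ell)$), and $a$ violates $\chi_c$ (for instance $a=\mathtt{resume}(\ell)$). By faithfulness of $\mathit{ext}$ to the retained history, $D_{\tau,t}$ contains $A_c$ with its recorded premises holding. Policy priority then keeps $A_c$ standing. There are two cases. If $\mathit{arg}_{\tau,t}(a)$ is undefined, $\mathit{grounded}(a)$ is false by convention. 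Otherwise the construction adds $(A_c,\mathit{arg}(a))\in\mathit{Att}$, and policy priority labels $\mathit{arg}(a)$ non-standing. Since $\mathit{Dep}^{*}$ is reflexive, $\Gdep\,\mathit{standing}$ fails at $\mathit{arg}(a)$, so $\neg\mathit{grounded}(a)$ holds. The step to state carefully is that faithfulness delivers exactly the retained $\rho_c$ that the guard reads. This is the same retention assumption under which \eqref{eq:past} lies in $\Lact$, and I will cite that remark.

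\emph{Converse.} Assume the side conditions: $\mathit{arg}(a)$ is defined, every proper $\mathit{Dep}^{*}$-successor stands, $A_c$ is the only standing attacker of $\mathit{arg}(a)$, and $\mathit{arg}(a)$ would be active without standing policy attackers. Suppose $\neg\mathit{grounded}(a)$. Since the proper successors all stand, the traversal in \eqref{eq:grounded} can fail only at $\mathit{arg}(a)$ itself, so $\mathit{arg}(a)\notin\mathit{St}$. By the last hypothesis its non-standing label must be due to a standing policy attacker, and by uniqueness that attacker is $A_c$. An attack edge from $A_c$ is added only when $\rho_c$ holds in the retained history and $a$ violates $\chi_c$, which is exactly $g_c(s_t,a)=1$. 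This is where I expect the main obstacle. The statement's hypotheses are written to make this step go through, so the care lies in arguing that ``would be active without standing policy attackers'' really excludes every other cause of non-standing, such as undermining or retraction by the Grounded Continuation operations. I would take the counterfactual literally: removing $A_c$'s attack restores the label active, so the attack is the cause. Faithfulness then links the attack edge back to the guard value.

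\emph{Consequence for the execution rule.} Consider an execution rule that executes $a$ only if $\mathit{grounded}(a)$. By the forward direction, every executed $a$ has $g_c(s,a)=0$, which is \eqref{eq:gate} on $\Sig_g$. This uses the same routing and checked-state-equals-executed-state provisos as the gate paragraph, applied by the same argument. A refused call with $g_c=0$ is ungrounded for a reason other than $A_c$, which gives the ``in addition only'' clause, read through the converse. The refusal names the attacker because the traversal finds $A_c$ as the standing attacker of $\mathit{arg}(a)$. Withdrawing $A_c$ removes it from $\mathit{St}$. Grounded Continuation's retraction property~\cite{gc} then says that only arguments resting on $A_c$ are affected and the attack is no longer from a standing argument, so under the converse's hypotheses $\mathit{arg}(a)$ regains standing and the refusal lapses.
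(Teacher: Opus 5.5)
Your proposal is correct and follows the paper's proof essentially step for step. The forward direction uses the attack edge, policy priority and reflexivity of $\mathit{Dep}^{*}$, with the undefined-argument case; the converse uses the traversal failing only at $\mathit{arg}(a)$ and the uniqueness of the standing attacker $A_c$; and the invariant is the contrapositive of the forward direction under the execution rule. Your treatment of the naming and retraction clauses is more explicit than the paper's ``the rest follows from the execution rule.'' Note that $\mathit{arg}(a)$ regains standing after $A_c$ is withdrawn because of the counterfactual hypothesis, not because of Grounded Continuation's retraction property, which only says what leaves $\mathit{St}$. That conclusion also needs none of $\mathit{arg}(a)$'s dependencies to rest on $A_c$.
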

\begin{proof}
$g_c(s_t,a)=1$ iff $\rho_c$ holds in the history and $a$ violates $\chi_c$, which is exactly when $A_c$ attacks $\mathit{arg}(a)$; then $\mathit{arg}(a)\notin\mathit{St}$ by policy priority and $\Gdep\mathit{standing}$ fails at it, or $\mathit{arg}(a)$ is undefined and $\mathit{grounded}(a)$ is false outright. For the converse, under its hypotheses $\Gdep\mathit{standing}$ can fail only at $\mathit{arg}(a)$ itself, and an otherwise active argument leaves $\mathit{St}$ only through a standing attacker, which by hypothesis is $A_c$; so $\rho_c$ holds and $a$ violates $\chi_c$. Other attackers or labels make the converse fail, which is why an execution rule on grounding refuses more than $g_c$. The rest follows from the execution rule.
\end{proof}

The gate fragment is thus the part of the agent fragment where standing is decided by a rule rather than by an observation; a policy-aware support check can drive the gate.

\subsection{The assurance ledger}
\label{sec:ledger}

Two axes must be kept apart: the \emph{claim strength} a method can deliver, and the \emph{status} a requirement reaches once evidence is in.

\paragraph{Contracts.}
An obligation is a tuple $r=(\Sig,\varphi,\App_r,\kappa)$ comprising a deployment, target formula, applicability condition, and acceptance contract. $\App_r$ says when a run exercises the obligation: $\App_c$ of~\eqref{eq:rule} for a rule, $\Eventually\,\exists a\,\mathit{checked}(a)$ for the support ideal, and every run for the outcome and warning conditions. We use three contract types. An \emph{exact} contract establishes $\varphi$ on every run of $\Sig$ under stated assumptions. A \emph{suite} contract requires $N_{\mathcal T}(\App_r)\ge k$ and $N_{\mathcal T}(\neg\varphi)=0$ on a named suite $\mathcal T$. A \emph{risk} contract requires $\Pr(\neg\varphi)\le\varepsilon$ with confidence $1-\delta$ for a named $\Dist_\Sig$. A suite contract records a test result, not compliance. An \emph{evidence record} names the deployment, target, observation interface, claim type (exact, suite, upper or lower risk bound, descriptive), value, suite or distribution, validity assumptions, and lineage. A method's \emph{capability} has the same fields without a value, so ``configured but not run'' is distinct from ``run and found nothing''.

\paragraph{Status and gaps.}
Fix a finite catalogue of target formulas and a finite acyclic graph of permitted claim transformations: exact conjunction, weakening an upper bound along event inclusion, combination by Proposition~\ref{prop:compose}, and declared equivalences such as~\eqref{eq:expired}$\Leftrightarrow$\eqref{eq:past}. An obligation is \emph{addressed} if a configured capability reaches it through the graph, and \emph{established} if an evidence derivation reaches it with matching deployment and suite or distribution, supported assumptions, and a typed claim meeting the contract; a violation, an unresolved assumption, or a descriptive statistic cannot discharge an exact contract. With $\Req^{\mathrm{est}}\subseteq\Req^{\mathrm{addr}}\subseteq\Req$, the ledger reports the \emph{expressiveness gap} $G_{\mathrm{expr}}=\Req\setminus\Req^{\mathrm{addr}}$ of obligations no configured method addresses and the \emph{establishability gap} $G_{\mathrm{est}}=\Req^{\mathrm{addr}}\setminus\Req^{\mathrm{est}}$ of addressed obligations whose contract is unmet, each entry with its reason: an observed counterexample, insufficient applicability, absent evidence, an unresolved assumption, or a scope mismatch.

\begin{proposition}[Finite conditional ledger]
\label{prop:ledger}
For a finite claim graph, decidable scope and contract checks, and supplied assumption statuses, addressedness and establishment are decidable; if the leaf claims and transformations are sound under their assumptions, every established entry is sound under the accumulated assumptions; and, for a fixed requirement set $\Req$, adding capabilities never enlarges $G_{\mathrm{expr}}$.
\end{proposition}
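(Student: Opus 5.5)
The plan is to treat the ledger as a reachability problem over a finite graph and prove the three claims separately, in the order stated.

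\emph{Decidability.} Fix the finite catalogue of target formulas, the finite set of configured capabilities, and the finite set of evidence records. Build the finite acyclic claim graph whose nodes are (target, deployment, suite or distribution, claim type) tuples and whose edges are the permitted transformations. Addressedness of an obligation $r$ is then reachability, ignoring values, from some capability node to a node matching $r$'s target. Because the graph is finite, this is decided by a single traversal.

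Establishment is handled the same way, with evidence records as sources. An edge may be used only if:
\begin{enumerate}
\item its scope check (matching deployment and suite or distribution) succeeds, which is decidable by hypothesis;
\item every assumption it attaches has supplied status ``supported'', which is a finite lookup.
\end{enumerate}
At the target node we then apply the contract check, which is decidable by hypothesis. Acyclicity ensures there are finitely many derivations, so one can enumerate them, or equivalently do a fixpoint over nodes. The exclusions (a violation, an unresolved assumption, or a descriptive statistic cannot discharge an exact contract) are built into the contract check.

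\emph{Soundness.} I would argue by induction on the length of a derivation, which is well founded because the graph is acyclic.
\begin{itemize}
\item \emph{Base case.} A leaf claim is sound under its own assumptions, by hypothesis.
\item \emph{Step.} A transformation applied to premises that are sound under assumption sets $A_1,\dots,A_k$ yields a claim sound under its own assumptions together with $\bigcup A_i$. This uses the hypothesis that transformations are sound, including Proposition~\ref{prop:compose} and the declared equivalences such as \eqref{eq:expired}$\Leftrightarrow$\eqref{eq:past}, which hold on completed runs.
\end{itemize}
Since the contract check only accepts a claim of the matching type, an established entry inherits the soundness of its derivation under the accumulated assumptions.

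\emph{Monotonicity.} Adding capabilities adds source nodes and leaves the edge set unchanged, because the graph of transformations is fixed. Reachability is monotone in the source set, so $\Req^{\mathrm{addr}}$ can only grow. With $\Req$ fixed, $G_{\mathrm{expr}}=\Req\setminus\Req^{\mathrm{addr}}$ can only shrink.

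I expect the main obstacle to be the soundness step. The ``accumulated assumptions'' must be defined so that the induction carries, which means assumptions have to be tracked as sets attached to derivation nodes. The induction must also not rely on uniqueness of derivations. I would handle this by proving the claim for each derivation separately, rather than for the node.

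A second subtlety to flag is that $G_{\mathrm{est}}$ is \emph{not} monotone: adding a capability can move an obligation from $G_{\mathrm{expr}}$ into $G_{\mathrm{est}}$. This explains why the proposition states monotonicity only for $G_{\mathrm{expr}}$.
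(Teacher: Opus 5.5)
Your proposal is correct and follows essentially the paper's route. The paper likewise processes the finite acyclic claim graph in topological order, propagating capability reachability and evidence derivations separately; it gets decidability from finiteness, conditional soundness by induction over that order, and the monotonicity of $G_{\mathrm{expr}}$ from new leaves preserving existing capability paths, and your per-derivation induction and your remark that $G_{\mathrm{est}}$ can grow are sound refinements of that argument.
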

\begin{proof}
Process the graph in topological order, propagating capability reachability and evidence derivations separately and accepting a derived claim only through its declared transformation; finiteness gives termination and induction over the order gives conditional soundness. New leaves preserve existing capability paths.
\end{proof}

The value of the proposition is its last clause: ``add a method'' is an operation whose effect on the ledger is computable. To read the ledger, begin with an obligation and its contract, ask whether a configured method can observe enough information to address it, and then compare the evidence type, scope, and assumptions with the contract. This order prevents a result for another deployment or a descriptive statistic from being mistaken for the required claim. For example, the gate establishes the exact contract for $\varphi_7$ on $\Sig_g$, and the suite records 30 violations of $\varphi_7$ on $\Sig$; conjoining the first with suite results on $\Sig$ to claim the policy on $\Sig$ is rejected, because no transformation in the graph changes the deployment. The valid derivation keeps two entries: $\varphi_7$ exact on $\Sig_g$, unmet on $\Sig$.

\paragraph{Transferring an outcome guarantee.}
Let $V=\{\tau:\neg\varphi(\tau)\}$ be a violation event and $F$ the oracle failure event. The omitted event is $B=V\cap F^c$, the violations the oracle passes.

\begin{proposition}[The missing compliance term]
\label{prop:transfer}
If valid evidence bounds $\Pr(F)\le u$ at confidence $1-\delta_F$ and $\Pr(B)\le b$ at confidence $1-\delta_B$, then $\Pr(V)\le\min\{1,u+b\}$ with confidence at least $1-\delta_F-\delta_B$. A bound on $\Pr(F)$ alone gives no bound below one on $\Pr(V)$ without a relation between $V$ and $F$.
\end{proposition}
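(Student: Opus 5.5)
The plan has two parts: a decomposition plus union bound for the positive claim, and a two-point counterexample for the negative claim.

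\emph{Positive claim.} First split the violation event along the oracle's verdict:
\[
V=(V\cap F)\cup(V\cap F^c)\subseteq F\cup B,
\]
since $V\cap F\subseteq F$ and $V\cap F^c=B$ by definition. Monotonicity and subadditivity of $\Pr$ under the named $\Dist_\Sig$ then give $\Pr(V)\le\Pr(F)+\Pr(B)$, and trivially $\Pr(V)\le 1$.

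The confidence statement is handled separately, treating each bound as a statement that holds except on a failure event of the sampling procedure. Let $E_F$ be the event that $\Pr(F)>u$, with probability at most $\delta_F$. Let $E_B$ be the event that $\Pr(B)>b$, with probability at most $\delta_B$. The union bound over these failure events gives that both bounds hold simultaneously with probability at least $1-\delta_F-\delta_B$. This needs no independence between the two evidence sources, which matters because the ledger warns that sources sharing labels are not independent witnesses. On that joint event, $\Pr(V)\le\min\{1,u+b\}$.

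Both bounds must refer to the same deployment and distribution. This is exactly the scope condition of the ledger's combination rule, so I would state it as a standing assumption rather than prove it.

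\emph{Negative claim.} The intended reading is: for any $u<1$ there is a distribution with $\Pr(F)\le u$ and $\Pr(V)=1$. I would build this on a two-point space of runs. Take $\tau_1$, which violates $\varphi$ and passes the oracle, with mass $1-u$. Take $\tau_2$, which violates $\varphi$ and fails the oracle, with mass $u$. Then $\Pr(F)=u$ while $\Pr(V)=1$, so every evidence state compatible with the bound on $F$ is also compatible with $\Pr(V)=1$. For an even sharper version, put all mass on $\tau_1$: then $\Pr(F)=0$ and still $\Pr(V)=1$.

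Realisability is supported by the paper's running example: the released telecom run that restores an ended line yet is scored a success is a genuine element of $B$. The phrase ``without a relation between $V$ and $F$'' is then read as saying that only a hypothesis constraining $B$ (such as $V\subseteq F$, or a bound on $\Pr(B)$) rules such distributions out.

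\emph{Main obstacle.} The mathematics is elementary, so the difficulty is stating the negative claim precisely enough to be provable. I would formalise it as a non-existence statement over all distributions consistent with the supplied evidence: no function of $u$ alone bounds $\Pr(V)$ below one. The two-point construction then witnesses this uniformly in $u$. A secondary point is being explicit that ``confidence'' refers to the probability over the evidence-generating procedure, so that the union bound on failure events is the right combination.
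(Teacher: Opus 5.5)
Your proposal is correct and matches the paper's proof: decomposing $V=(V\cap F)\cup B$ gives $\Pr(V)\le\Pr(F)+\Pr(B)$, and a union bound over the two confidence failures, with no independence needed, gives the positive claim. Your ``sharper version'', which puts all mass on a run that violates $\varphi$ and passes the oracle, is exactly the paper's witness (empty oracle-failure event, every run violating), while your two-point construction with $\Pr(F)=u$ is a harmless refinement showing the gap persists for every value of $u$.
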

\begin{proof}
$V=(V\cap F)\cup B$ disjointly, so $\Pr(V)\le\Pr(F)+\Pr(B)$; a union bound over the two confidence failures gives the confidence, without independence. For the last claim take $F=\varnothing$ and $V=\Runs(\Sig)$.
\end{proof}

The object is the obligation to measure or exclude $B$; repeating the outcome evaluation, calibrating a predictor of outcome labels~\cite{bates2021rcps}, or improving that predictor does not supply it. A monitor trained on outcome labels gives no evidence about $V$.

\paragraph{Combining bounds.}
\begin{proposition}[Composition of upper bounds]
\label{prop:compose}
For obligations $\varphi_1,\ldots,\varphi_m$ on one deployment distribution with valid bounds $\Pr(\neg\varphi_j)\le\varepsilon_j$ at confidence $1-\delta_j$,
$\Pr(\neg\bigwedge_j\varphi_j)\le\min\{1,\sum_j\varepsilon_j\}$ at confidence at least $1-\sum_j\delta_j$, under the joint assumptions and with no independence required.
\end{proposition}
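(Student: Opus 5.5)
The argument has two layers, an event-level inequality and a union bound over the confidence failures. It follows the proof of Proposition~\ref{prop:transfer}, which is the case $m=2$ with one event split.

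\emph{Event layer.} Write $V_j=\{\tau:\neg\varphi_j(\tau)\}$ for the violation event of obligation $j$ under the common distribution $\Dist_\Sig$. By De Morgan, $\neg\bigwedge_j\varphi_j$ holds on a run iff some $\varphi_j$ fails, so the joint violation event is $\bigcup_j V_j$. Boole's inequality then gives $\Pr(\bigcup_j V_j)\le\sum_j\Pr(V_j)$, and since every probability is at most one, $\Pr(\neg\bigwedge_j\varphi_j)\le\min\{1,\sum_j\Pr(V_j)\}$. This inequality is deterministic and uses no independence among the $V_j$.

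\emph{Confidence layer.} For each $j$, let $E_j$ be the event, over the randomness of the evidence used for $\varphi_j$, that the reported bound fails, i.e.\ $\Pr(V_j)>\varepsilon_j$. Validity of the $j$-th bound means $\Pr(E_j)\le\delta_j$. On the complement of $\bigcup_j E_j$, every $\Pr(V_j)\le\varepsilon_j$ holds at once. Substituting into the event-layer inequality gives $\Pr(\neg\bigwedge_j\varphi_j)\le\min\{1,\sum_j\varepsilon_j\}$. A second application of Boole's inequality gives $\Pr(\bigcup_j E_j)\le\sum_j\delta_j$, so the conclusion holds with confidence at least $1-\sum_j\delta_j$. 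This step also needs no independence: the evidence sets may share samples, and the union bound still applies.

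\emph{Main obstacle.} The delicate point is interpretive rather than computational. It is to make precise that all $m$ bounds refer to the same $\Dist_\Sig$, and that the confidence events $E_j$ live on one probability space, namely the joint sampling of all evidence, so that the union $\bigcup_j E_j$ makes sense. I would state this as the ``joint assumptions'' clause: each bound's validity assumptions hold simultaneously, and the bounds are coupled on a common evidence space. If two bounds were computed for different deployments or distributions, the event-layer step would compare probabilities under different measures and fail. This is why the proposition fixes one deployment distribution, consistent with the ledger's rule that no transformation changes the deployment.
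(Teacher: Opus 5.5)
Your proof is correct and is essentially the paper's argument: a union bound over the evidence-failure events $E_j$ for the confidence, and a union bound over the violation events $V_j$ for the probability, with neither step using independence. Reading the ``joint assumptions'' clause as requiring one common evidence space and one common distribution, which the paper leaves implicit, is a faithful elaboration.
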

\begin{proof}
A union bound over evidence draws, then a union bound over violation events.
\end{proof}

An exact claim contributes zero risk under its assumptions; a suite result contributes no deployment-risk term without a generalisation argument; a lower bound cannot stand in for an upper one; a gate's zero on $\Sig_g$ cannot be added to bounds on the ungated $\Sig$ without a transfer argument; and two sources sharing labels are one witness, not two.

\section{Evaluation on telecom trajectories}
\label{sec:study}

\subsection{Setup}
\label{sec:setup}

\paragraph{Corpus and policy.}
We use the released $\tau^2$-bench telecom trajectories~\cite{tau2bench} with the default policy: 114 tasks with four trials each, 456 runs, assistant \texttt{claude-3-7-sonnet-20250219}, simulated customer \texttt{gpt-4.1-2025-04-14}, as distributed with the benchmark. Twenty when--shall rules were extracted from the policy and troubleshooting workflow with source spans, then human-audited (two repaired); they are the reference specification, not a validated complete account (Table~\ref{tab:rules}).

\paragraph{Instruments.}
The rule checker evaluates $\App_c$ and $\Viol_c$ for every rule on every run. Four rules ($c_{12},c_{17},c_{18},c_{20}$) use task fault metadata available to the assessor but not the agent. A second, separately written implementation from the same rule texts agrees on all $20\times456$ pairs; it shares the rule interpretation, so it checks the code, not the interpretation. The \emph{support stand-in} uses two deterministic predicates on the agent's prefix in place of $\mathit{ext}$, keeping interpreter error out of the numbers. A closing claim that the fault is fixed must follow the agent's own diagnostic returning healthy after its last state-changing action; a refuel must follow the agent's own reading of an exhausted allowance. The prefix monitor uses TF--IDF features over event history and the last six steps, class-balanced logistic regression, and five task-grouped folds. Its threshold is chosen on pooled out-of-fold scores for at most 10\% run-level false alarms. It is a baseline in the PrefixGuard setting, not a reproduction, and its threshold uses all labels, so monitor-only counts are descriptive. The gate replays $g_7$ at recorded restoration calls without resuming the agent. ``Rule violation'' means a violation of one of the nineteen rules other than the formatting rule $c_{11}$, violated in 452 runs. All instruments are stand-ins for the cited methods; the evaluation tests the framework's accounting, not those methods' accuracy. Rules, source spans, per-run verdicts, and scripts are available in the 1\href{https://github.com/xiaoweih/Organising-Trajectory-Evidence-for-Language-Model-Agent-Assurance}{public repository}.

\begin{table}[t]
\caption{Audited rules on the 456-run suite. Positive applicability with zero violations meets only the suite contract.}
\label{tab:rules}
\footnotesize
\begin{tabularx}{\columnwidth}{@{}lXrr@{}}
\toprule
Rule & Requirement (abbreviated) & App. & Viol.\\
\midrule
$c_1$ & Identify customer before account write & 302 & 0\\
$c_2$ & Read bill before payment request & 84 & 0\\
$c_3$ & Re-check bill after payment & 83 & 2\\
$c_4$ & Refuel at most 2 GB & 181 & 0\\
$c_5$ & Obtain plan price before refuelling & 181 & 4\\
$c_6$ & Read line record before restoration & 83 & 0\\
$c_7$ & No restoration after ended-contract record & 32 & 30\\
$c_8$ & Mandated sentence after hand-off & 269 & 0\\
$c_9$ & Attempt resolution before hand-off & 269 & 2\\
$c_{10}$ & At most one call per message & 456 & 0\\
$c_{11}$ & No prose and tool call in one message & 456 & 452\\
$c_{12}$ & Status-bar check before device fix (no-service) & 113 & 92\\
$c_{13}$ & Check network before SIM intervention & 47 & 18\\
$c_{14}$ & Confirm SIM after reseating & 218 & 118\\
$c_{15}$ & Diagnostic after device fix & 435 & 23\\
$c_{16}$ & Check usage before refuelling & 181 & 0\\
$c_{17}$ & Check app permissions for relevant fault & 140 & 37\\
$c_{18}$ & Check data restriction for relevant fault & 76 & 46\\
$c_{19}$ & Enable roaming when required & 93 & 36\\
$c_{20}$ & Network-mode check for relevant fault & 224 & 132\\
\bottomrule
\end{tabularx}
\end{table}

\subsection{RQ1: adding methods one at a time}
\label{sec:each}

We add the methods in the order a deployer would meet them; the order is illustrative, and the order-free unique counts are in RQ2. At each step we report what the method finds, then a recorded run it misses and the fragment that run's failure lies in, which motivates the next method. Step numbers are positions in the recording.

\paragraph{Oracle: 231 runs.}
The oracle fails 231 runs, some of which nothing else marks: in \texttt{8f7cc8c9} the agent works for 49 steps, breaks no audited rule, makes no claim, and never fixes the fault. It misses \texttt{b183c7f2}, scored 1.0:
\begin{quote}\small
[9--10] look up line L1002 $\to$ \emph{Suspended, contract\_end\_date 2025-01-31}\\
{}[14--27] overdue bill paid\quad[28] \texttt{resume\_line(L1002)} $\to$ \emph{resumed}\\
{}[37] \texttt{transfer\_to\_human\_agents} $\to$ \emph{Transfer successful}
\end{quote}
The oracle requires only the hand-over, which happened; the restoration violates $\varphi_7$, a formula of $\Lobs$ over the whole run.

\paragraph{Rule checker: 391 runs.}
The checker reports $\Viol_7$ for \texttt{b183c7f2} at step 28 and for 29 other runs, 22 of which the oracle rewards; across the nineteen substantive rules it flags 338 runs with 540 violated rule--run pairs (992 with $c_{11}$), and the union rises to 391. Every rule is exercised, eighteen at least fifty times; seven meet the suite contract. It misses \texttt{418e4cc9}, score 0, which breaks no rule:
\begin{quote}\small
[19--21] usage L1002 $\to$ \emph{8.7 of 15 GB}; ``you haven't exceeded your data limit''\\
{}[28--42] network mode, VPN, roaming, reboot $\to$ \emph{No Connection}\\
{}[45--47] usage again $\to$ \emph{8.7 of 15 GB}; ``you still have plenty of data''\\
{}[48] \texttt{refuel\_data(L1002, 1 GB)} $\to$ \emph{\$2.00 charged}
\end{quote}
The top-up is under the 2\,GB limit and its price was checked, so $c_4$, $c_5$ and $c_{16}$ hold. The agent's facts are correct and current, and it states them; what is wrong is the decision, which rests on a premise the agent itself found false. That is an inner-tier formula in $\Lloc$.

\paragraph{Support stand-in: 401 runs.}
The support stand-in flags the refuel and 116 other runs: 112 with an unsupported closing claim (123 of 181 claims) and 5 with an unsupported refuel; 71 of the 117 pass the oracle, a second failure class the oracle rewards. A grounded action can still violate a rule (\texttt{ae5024a8} restores an expired line with every check in hand), and an ungrounded one can turn out right (\texttt{b662e94d}). The union rises to 401. It misses \texttt{0c00aad5}, an MMS fault with six causes, where over 107 steps no action lacks support and the run still fails: the failure is a drift, not a step, and every method so far decides a formula only after the relevant events.

\paragraph{Prefix-monitor baseline: 407 runs.}
The outcome-trained monitor's score crosses its threshold at step 4 of \texttt{0c00aad5}; six runs the oracle passes are marked by the monitor alone, taking the union to 407. Its target is its labels' event: trained on outcomes it detects 161 of 231 oracle failures and 0 of 30 $c_7$ breaches; trained on $c_7$ it detects 22 of 30, with 42 of its 64 alarms on non-violating runs. A second released configuration (assistant \texttt{gpt-4.1-2025-04-14}, workflow policy) shows the same contrast: 0, 22 and 20 of its 22 breaches for outcome-, rule- and $c_7$-trained monitors. As Proposition~\ref{prop:transfer} predicts, an outcome-trained monitor inherits the oracle's blind spot. Every method so far detects; none stops step 28 of \texttt{b183c7f2}.

\paragraph{Gate: 407 runs, one exact claim.}
Replay of $g_7$ rejects all 30 recorded restoration calls; by Proposition~\ref{prop:subsume}, entering $c_7$ as a standing argument renders all thirty ungrounded. The gate adds no run to the union but changes one obligation's claim type from a suite count to an exact invariant on $\Sig_g$; steps 30--37 of \texttt{b183c7f2} are not a run of $\Sig_g$.

\subsection{RQ2: do the methods differ?}
\label{sec:merged}

We compare four run-level verdicts: rule violation ($C$), support failure ($G$), outcome-trained alarm ($M$), and oracle failure ($O$). Table~\ref{tab:overlap} also gives the joint count expected if two sources with the observed flag rates marked runs independently. These expected counts are descriptive references, not significance tests; they treat runs as independent, although the four trials of a task are clustered. The rule checker's observed overlaps are close to those references: 102 versus 87 with support, 131 versus 136 with the monitor, and 178 versus 171 with the oracle. Support overlaps less with the oracle and monitor than the reference counts suggest; 71 of its 117 flagged runs pass the oracle. By contrast, the monitor closely repeats the oracle because it was trained on oracle labels: their disagreement is 0.20, compared with an independence reference of 0.50. Each source also contributes flags that no other source makes: 95 runs are flagged only by the rule checker, 5 only by support, 6 only by the monitor, and 7 only by the oracle. Without the rule checker, the other three reach 312 runs. The 22 rewarded restorations are reported by the rule checker alone. The verdicts are therefore neither interchangeable nor independent, and their counts should not simply be added.

\begin{table}[t]
\caption{Pairwise overlap of run-level verdicts. ``Exp.'' is the product-of-marginals reference under run-level independence; ``Dis.'' is the fraction of runs on which the verdicts differ, with its independence reference.}
\label{tab:overlap}
\small
\begin{tabular}{@{}lrrrrrr@{}}
\toprule
Pair & Both & Exp. & Only 1st & Only 2nd & Dis. & Indep.\\
\midrule
$C$/$G$ & 102 & 87 & 236 & 15 & .55 & .62\\
$C$/$M$ & 131 & 136 & 207 & 52 & .57 & .55\\
$C$/$O$ & 178 & 171 & 160 & 53 & .47 & .50\\
$G$/$M$ & 30 & 47 & 87 & 153 & .53 & .45\\
$G$/$O$ & 46 & 59 & 71 & 185 & .56 & .50\\
$M$/$O$ & 161 & 93 & 22 & 70 & .20 & .50\\
\bottomrule
\end{tabular}
\end{table}

\subsection{RQ3: what the merged evidence establishes}
\label{sec:ledgerstudy}

Table~\ref{tab:ledger} is the merged ledger. Seven rule contracts are established on the suite and one invariant on the gated deployment; the same prohibition, $\varphi_7$, is unmet on $\Sig$ (30 violations) and established exactly on $\Sig_g$, so status depends on the system and the contract, not on the requirement alone. The thirteen violated rules, the support ideal, and the outcome and warning ideals are addressed but unmet; the policy risk and the true-state prohibition are unaddressed.

\begin{table*}[t]
\caption{The merged ledger, conditional on the audited rule semantics and correct implementation. ``Established on suite'' never means established on all deployment runs; the gated rows concern a different deployment.}
\label{tab:ledger}
\small
\begin{tabularx}{\textwidth}{@{}>{\raggedright\arraybackslash}p{3.6cm}p{0.8cm}>{\raggedright\arraybackslash}p{1.6cm}>{\raggedright\arraybackslash}p{3.2cm}>{\raggedright\arraybackslash}X@{}}
\toprule
Obligation & System & Contract & Status & Evidence or missing premise\\
\midrule
$c_1,c_2,c_4,c_6,c_8,c_{10},c_{16}$ & $\Sig$ & Suite, $k=1$ & Established on suite & Applicable and unviolated on the 456 runs.\\
Other 13 rules, incl.\ $c_7$ & $\Sig$ & Suite, $k=1$ & Addressed, not established & At least one observed violation each.\\
Support ideal & $\Sig$ & Suite, $k=1$ & Addressed, not established & 117 runs with an ungrounded action.\\
Outcome $\neg F$ & $\Sig$ & Suite, $k=1$ & Unmet: counterexamples & 231 failures.\\
Outcome risk $\le\varepsilon$ & $\Sig$ & Risk & Not established: scope & No deployment distribution; runs clustered by task.\\
Warning ideal & $\Sig$ & Exact & Addressed, not established & Retrospective operating points do not establish the ideal.\\
Recorded $c_7$ prohibition & $\Sig_g$ & Exact & Conditionally established & Gate invariant, routing, retained history, Eq.~\eqref{eq:past}.\\
Other 19 rules; support & $\Sig_g$ & Suite & Addressed, no evidence & No gated continuations.\\
Policy risk $\Pr(\neg\bigwedge_c\varphi_c)$ & $\Sig$ & Risk & Unaddressed & No risk-contract method configured; Prop.~\ref{prop:transfer}'s $B$ unmeasured.\\
Policy risk, gated & $\Sig_g$ & Risk & Unaddressed & Separate obligation; no evidence transfers from $\Sig$.\\
True-state $c_7$ ($e^\star$) & all & Exact & Unaddressed & World predicate; no source reads it.\\
Rules absent from $C$ & all & any & Unaddressed & Outside the audited inventory.\\
\bottomrule
\end{tabularx}
\end{table*}

\section{Placing further method families}
\label{sec:stress}

A framework for organising methods should also represent methods it was not built around. We consider six further families, chosen to cover requirements our evaluation does not (information flow, adversarial input, relations between runs) together with two that sit close to existing rows (conformance, enforcement): for each, we write its target as a formula, identify the information that decides it, and state its claim type (Table~\ref{tab:stress}). There are three outcomes.

\begin{table}[t]
\caption{Further method families placed in the framework. ``Fits'': an existing fragment; ``new map'': a new observation map or distribution; ``relational'': a property of several runs, outside the current logic.}
\label{tab:stress}
\footnotesize
\begin{tabularx}{\columnwidth}{@{}>{\raggedright\arraybackslash}p{2.5cm}>{\raggedright\arraybackslash}Xl@{}}
\toprule
Method family & Target & Placement\\
\midrule
Workflow conformance~\cite{declare4py,pm4py} & ordering and eventual obligations & fits $\Lobs$\\
Specified enforcement~\cite{agentspec,ctrace} & approval, budget, consent guards & fits $\Lact$\\
Information flow~\cite{agentarmor,toolprivacybench} & purpose-bound disclosure $\varphi_{\mathrm{flow}}$ & new map\\
Adversarial scenarios~\cite{agentdojo} & authorised actions under injection & new distribution\\
Repeated-run stability & $\Pr[q(\tau)\neq q(\tau')]\le\varepsilon$ & relational\\
Counterfactual pairs & decision tracks permission, not cues & relational\\
\bottomrule
\end{tabularx}
\end{table}

\paragraph{Placed without change.} Declarative conformance checkers~\cite{declare4py,pm4py} decide finite-trace constraints over the complete log, the fragment and formula class of the rule checker; their role is an independent implementation of $\Lobs$, discharging the assumption that the checker implements the audited formulas. Runtime-enforcement languages~\cite{agentspec,ctrace} are instances of the gate: constraints such as $\Always\,\forall a\,(\mathit{risky}(a)\wedge\Exec(a)\Rightarrow\Past\,\mathit{approved}(a))$ have the guard shape of $\Lact$, and their reported extractor noise is the faithfulness assumption we attach to $\mathit{ext}$.

\paragraph{Placed by a new observation map.} Information-flow requirements, that data an executed action carries reach only sinks the purpose permits, need provenance atoms the log does not record. Provenance tracking~\cite{agentarmor,toolprivacybench} supplies an instrumented log whose observation map defines a fragment $\mathcal L_{\mathrm{flow}}\not\subseteq\Lobs$ by Proposition~\ref{prop:info} unchanged. Adversarial testing~\cite{agentdojo} keeps the formula but changes the distribution; scope matching then credits attack-suite evidence to $\Dist_{\mathrm{adv}}$, not $\Dist_\Sig$, without a transfer argument.

\paragraph{Beyond the logic.} Stability testing compares repeated runs, and counterfactual testing compares runs differing only in irrelevant cues or in permission. Both are hyperproperties~\cite{hyperproperties}, properties of sets of runs, and no formula over a single trajectory states them. This bounds the logic itself, not the configured methods; a relational extension over run tuples would admit them.

Four of the six families are represented by an existing fragment, a new observation map, or a new distribution; the other two locate a precise extension. The exercise also gives an admission rule for the pool: add a method when it decides a new fragment, matches a new distribution, strengthens a claim, or finds requirements.

\section{The residual and how to reduce it}
\label{sec:residual}

The evaluation leaves a residual with three parts, and the framework says what each needs.

\paragraph{Unflagged runs.} In 49 runs no method found a failure. This is not a compliance claim: each method decides only its target formula, and a failure outside all targets leaves all of them silent. Only an independent audit, or a checker for an untargeted formula, reduces it.

\paragraph{Unmet and unaddressed obligations.} Counting does not close the establishability gap. Rule $c_{10}$ applies in all 456 runs with no violation. Taking the 114 tasks as the independent units, the one-sided 95\% upper bound on its violation rate is $1-0.05^{1/114}\approx2.6\%$; treating the clustered runs as independent would give 0.66\%, an optimistic figure. The bound concerns that rule's wording only: $c_{16}$, ``check usage before refuelling'', also has no violation in 181 applicable runs, yet \texttt{418e4cc9} satisfies it while refuelling against its own reading. Rare-event estimation reaches below $1/N$ by re-sampling from intermediate stages~\cite{scarce}, which a recorded corpus cannot do, and its guarantee bounds overshoot, giving a floor on failure probability where a risk contract needs a ceiling. The policy-risk obligation therefore stays unaddressed, and Proposition~\ref{prop:transfer} says why no outcome bound substitutes for it.

\paragraph{Undiscovered requirements.} The framework establishes only what is written down, and the fourth failure of the introduction shows why this matters. When true facts are selected to mislead~\cite{lyingtruths}, every fact passes, every rule over events holds, and an interpreter may accept the curated facts as observations, so the support check passes too. The failing property concerns where the evidence came from and what was omitted; no configured method decides it; it must first be written down.

\paragraph{Two fronts.} \emph{Discovery} enlarges $\Req$: composition analysis finds rules no document states~\cite{skillfuzz}, and rule extraction turns policy text into formulas~\cite{skillcov}. \emph{Establishment} moves obligations from unaddressed to established: a checker for an obligation's fragment shrinks $G_{\mathrm{expr}}$ by Proposition~\ref{prop:ledger}, and stronger evidence, such as a gate in place of a count, shrinks $G_{\mathrm{est}}$. Nothing certifies that none remain undiscovered.

\section{Related work}
\label{sec:related}

\paragraph{Assurance cases and dependent evidence.}
Safety cases attach evidence and assumptions to scoped claims~\cite{kelly1998}; AMLAS applies this to machine-learned components~\cite{amlas}; multi-legged arguments explain why diverse evidence need not be independent support~\cite{bloomfield2003,littlewood2007}; dynamic assurance cases track claims through change~\cite{denney2015,calinescu2018}. Claim--argument--evidence notations such as GSN leave the logic of claims and the scope of evidence to the analyst; our ledger adds addressedness computed from information fragments, explicit matching of deployment, suite and distribution, no evidence transfer across deployments, and an expressiveness gap that shrinks as capabilities are added for fixed $\Req$.

\paragraph{Verification and enforcement of agents.}
Interpreted systems make local information central~\cite{fagin1995}; temporal--epistemic model checking verifies explicit multi-agent models~\cite{lomuscio2017}; runtime verification checks observed executions~\cite{leucker2009rv}; shields enforce action constraints, including under partial observability~\cite{konighofer2017,carr2023}. Assume--guarantee reasoning composes verified components; we compose evidence of different strengths about one system. Our gate result is a conditional invariant for a stated predicate, not a synthesised shield.

\paragraph{Evaluation methods for agents.}
Natural-language-to-temporal-logic tools and specification mining address how requirements are obtained~\cite{nl2spec}, which our framework treats as discovery; $\tau^2$-bench evaluates interaction with tools and users~\cite{tau2bench}. Skill Coverage, SkillFuzz, Grounded Continuation, PrefixGuard and SCARCE~\cite{skillcov,skillfuzz,gc,prefixguard,scarce} each supply one kind of evidence; this paper supplies the language in which their targets are formulas and the account of what they jointly establish.

\section{Limitations}

\emph{Specification}: the twenty rules are one audited interpretation of one policy, not a validated complete account. \emph{Instruments}: the rule checker's second implementation shares its interpretation; the support and monitor instruments are stand-ins, the support predicates were audited by a model, not by people, and gate replay shows no post-refusal effect. \emph{Statistics}: four trials per task are clustered, monitor thresholds are retrospective, and we report suite counts, not operational rates. \emph{Deployment}: results concern one configuration $\Sig$; a new model, prompt or tool set is a new deployment. \emph{Generality}: one telecom collection; other domains may need atoms the logs do not record, and the further families of Section~\ref{sec:stress} are represented, not run. \emph{Interpreter}: the support check is exact for the structure $\mathit{ext}$ builds, so a faithful but incomplete extraction silently narrows what it can detect.

\section{Conclusion}

We have shown that heterogeneous evidence about language-model agents can be organised in one two-tier logic, whose fragments say which method decides each requirement and whose ledger states what the merged evidence establishes; on telecom, the methods jointly flag 407 of 456 runs yet establish only suite-scoped claims and one exact gated invariant. Future work covers relational requirements, human validation of $\mathit{ext}$, feeding discovered requirements into the ledger, and risk contracts under deployment distributions in other domains.

\clearpage
\balance
\bibliographystyle{ACM-Reference-Format}
\bibliography{references}
\end{document}